\newcommand{\RR}{{\mathbb{R}}}
\newcommand{\PP}{{\mathbb{P}}}
\newcommand{\A}{\mathscr A}
\newcommand{\ov}{\mathbf{e}}
\newcommand{\ox}{\mathbf{x}}
\newcommand{\jv}{\mathbf{j}}
\theoremstyle{plain}
\newtheorem{theorem}{Theorem}[section]
\newtheorem{proposition}[theorem]{Proposition}
\theoremstyle{definition}
\newtheorem{remark}[theorem]{Remark}
\begin{document}
\title{Future Exchange Rates and Siegel's Paradox}
\let\thefootnote\relax\footnotetext{To appear in Global Finance Journal. \url{https://doi.org/10.1016/j.gfj.2018.04.007}}
\author{Keivan Mallahi-Karai}
\address{Authors' affiliations and E-mails:}
\thanks{Keivan Mallahi-Karai, Jacobs University, Campus Ring I, 28759 Bremen, Germany. \texttt{k.mallahikarai@jacobs-university.de}}
\author{Pedram Safari}
\thanks{Pedram Safari (corresponding author), Department of Mathematics and Institute for Quantitative Social Science, Harvard University, Cambridge, MA 02138, USA. \texttt{safari@fas.harvard.edu}}
\thanks{\copyright\, 2018. This manuscript version is made available under the CC-BY-NC-ND 4.0 license 
\\ \url{http://creativecommons.org/licenses/by-nc-nd/4.0/}}
\date{} 
\maketitle
\begin{abstract}
Siegel's paradox is a fundamental question in international trade about exchange rates for futures contracts and has puzzled many scholars for over forty years. 
The unorthodox approach presented in this article leads to an {\it arbitrage-free} solution which is invariant under currency re-denominations and is symmetric, as explained. 
We will also give a complete classification of all such aggregators in the general case. The formula obtained in this setting therefore describes all the negotiated no-arbitrage forward exchange rates in terms of a reciprocity function. 
\medskip\par\noindent
Keywords: international trade, forward exchange rates, futures contract, discount bias, Siegel's paradox
\end{abstract}
%
%%%%%%%%%%%%%%%%%%%%%% Main Text %%%%%%%%%%%%%%%%%%%%%%
% 
\section{Introduction}
\par
Siegel's paradox, discussed by \cite{Siegel72}, is a discount bias 
in future exchange rates and is often discussed in connection with 
\cite{Nalebuff} puzzle in the context of expected values.
It can briefly be described in simple terms as follows. Let us assume that there are two 
possible states of the world $\omega_1$ and $\omega_2$ at a specific time 
$T$ in the future, both having an equal chance of 50\% of occurring, where 
the exchange rate of Euros to US dollars is expected to be $e_1$ and $e_2,$ 
respectively. It appears that an investor who wants to exchange Euros to 
US dollars at time $T$ would consider the expected value 
$\frac12(e_1+e_2)$ as the spot price for a futures contract at time 0. 
Meanwhile, for their trading counterpart, who wants to perform the reciprocal exchange of 
US dollars to Euros, this price would be $\frac12(e_1^{-1}+e_2^{-1})$ in their own currency, which 
is an obvious disagreement. This is the content of Siegel's paradox --- 
that these (risk-neutral) investors cannot both opt for the arithmetic 
mean as the spot price for a futures contract. In other words, the source 
of the paradox lies in the fact that the arithmetic and harmonic means do 
not coincide.
\par
It should not be surprising that this problem is of great significance in international trade. 
The Bank for International Settlements estimates that the {\it daily} turnover in foreign exchange markets 
far exceeds \$1 trillion (\cite{Rogoff}).
There have been numerous theoretical attempts at understanding this paradox, as well as enormous amount of empirical testing. To begin with, \cite{Siegel72}  
seems to have viewed it not as a paradox, but as a relationship 
between foreign exchange prices and interest rates under risk neutrality.  
However, in the absence of interest rates, the paradox 
will not fade away. He suggested that the investors in the "small country" 
choose a biased estimator, as their own forward rate, which lies between 
the geometric and harmonic means. 
\cite{Roper75} suggested that the paradox may be resolved if the investors take their forward profits in the 
foreign currency. However, if the investors wish to take their profits in 
their own domestic currencies, they have to be risk-averse (\cite{Beenstock85}).
A nice overview of Siegel's paradox could be found in \cite{Edlin02}.
None of the proposed solutions, to our knowledge, suggests a specific forward 
exchange rate that is acceptable to risk-neutral investors in both currencies.
\par
In their classic text on international macroeconomics, \cite{Rogoff} % , pages 585--593
dedicate an entire section to Siegel's paradox and its ramifications. They analyze the empirical tests for prediction bias in forward rates in detail and develop a stochastic monetary model to understand forward exchange pricing. They explain that in their model, if (relative) {\it purchasing power parity} holds, the expected real returns will be zero for one party if it is zero for the other, but the argument fails when the PPP fails for any reason. They also suggest that the forward exchange rate in the equilibrium ends up to be a negotiated rate between the expected value of the future spot rate from one investor's perspective and that of their counterpart --- 
in other words, if ${\mathcal E}_{T}$ is the future spot rate at time T in the future, the negotiated rate will be between $E({\mathcal E}_{T})$ and $1/{E(1/{\mathcal{E}_T})},$ which is consistent with our findings.
\par
In this paper, we take a non-traditional approach and seek an aggregator that is 
arbitrage-free, symmetric and invariant under redenominations. 
These conditions will be defined precisely in the next section. 
We will show that under these natural axioms, 
the only possible aggregator for Siegel's paradox will be the geometric mean. Note that 
for any pair of positive numbers, the geometric mean always lies between the 
arithmetic and harmonic means. 
% \par
We will go even further and give a complete classification of the aggregators satisfying the generalized axioms in any dimension. The main result (Theorem \ref{mainn}) gives a formula %\eqref{aggregatorsn} 
for these aggregators in terms of the geometric mean and a reciprocity function. 
\par
Our approach not only provides an unbiased common ground to the exchange problem in Siegel's paradox, 
but might eventually even shed a new light in
understanding other questions in future discount rates, such as in the
Weitzman-Gollier puzzle (\cite{Weitzman-Gollier}).
\section{The Model}
\par
In this model, we will assume that there are two possible states of the world $ \omega_1$ and $ \omega_2$ at a fixed time $T$ in the future. 
We consider the possible values of foreign exchange rate (we use European and American currency) EUR/USD at time $T$ and assume that it can attain values 
$$R_{_{\rm EUR/USD}}(\omega_1)=e_1, \, R_{_{\rm EUR/USD}}(\omega_2)=e_2.$$
\par 
The reciprocal exchange rates for USD in terms of Euros will be $1/e_1,1/e_2,$ that is, 
$$R_{_{\rm USD/EUR}}(\omega_1)=e_1^{-1}, \, R_{_{\rm USD/EUR}}(\omega_2)=e_2^{-1}.$$
\par
Siegel's paradox shows that investors who want to exchange Euros to US dollars at time $T$ and the ones
who want to exchange US dollars to Euros at time $T$ cannot both use the arithmetic mean as the spot price
for the futures contract at time $0$.  
\par
Our point of view is that the arithmetic mean (or the expected value with respect to probability
distribution $(1/2,1/2)$) is not the right {\it ``aggregator''} to be used in this context.  
We will consider three natural axioms that any reasonable aggregator must satisfy.
Underlying each axiom is what may be viewed as an {\it invariance principle} stating the aggregate price 
must be preserved under certain transformations of the given data. Surprisingly, these three axioms are 
strong enough to determine a unique aggregator. Let us denote the aggregator by $\A(e_1,e_2)$. 
Our axioms are as follows.
\par
\begin{itemize}
\item {\bf Symmetry}. We will assume that there is no particular order on the two possible states of the world, hence the aggregator must be indifferent to the order in which the rates are listed.  
This implies that 
$$\A(e_1,e_2)= \A(e_2,e_1).$$
\item {\bf Re-denomination}. Upon redenomination of currencies, say, replacing a Euro by $100$ Euro cents, the exchange rates have to be adjusted accordingly, here by a factor of $100$. More generally, if we denote the redenomination factor by $ \lambda$, the following identity must hold:
$$\A( \lambda e_1, \lambda e_2)= \lambda \cdot \A(e_1, e_2).$$
\item {\bf Reciprocity}. This axiom explicitly precludes what happens in Siegel's paradox. It can also naturally be viewed as a {\it no-arbitrage} constraint for the exchange rates, stipulating that the aggregator prevents the possibility of making risk-less money by exchanging Euros to Dollars and then back into Euros (assuming, obviously, that there are no transaction costs). This condition can be expressed as
$$\A(e_1^{-1},e_2^{-1})=\A(e_1,e_2)^{-1}.$$
\end{itemize}
\section{The Geometric Mean}
\par
We will identify the set $E$ of, say, EUR/USD (and USD/EUR) exchange rates with the set of positive real numbers. An aggregator is a function $\A: E \times E \to E$, which aggregates 
the two possible exchange rates $e_1, e_2$ into one deterministic rate $\A(e_1, e_2)$. 
The following theorem gives a complete characterization of the aggregators $\A$ that satisfy the Symmetry, Redenomination and Reciprocity axioms. 
\par
\begin{theorem}\label{main2}
Let $\A: E \times E \to E$ be an aggregator, which satisfies the following axioms:
\begin{enumerate}
\item[\bf{A1.}] {\it Symmetry:} For all $e_1,e_2 \in E$, we have $\A(e_1,e_2)=\A(e_2,e_1).$
\item[\bf{A2.}]  {\it Redenomination:} For all $e_1,e_2 \in E$ and every $\lambda>0$, we have
$$\A( \lambda e_1, \lambda e_2)= \lambda \cdot \A(e_1. e_2).$$
\item[\bf{A3.}]  {\it Reciprocity:} For all $e_1,e_2 \in E$, we have 
$$\A(e_1^{-1},e_2^{-1})=\A(e_1,e_2)^{-1}.$$
\end{enumerate}
Then,
$$\A(e_1, e_2)= \sqrt{e_1 e_2}$$
for all $e_1,e_2 \in E.$ 
\end{theorem}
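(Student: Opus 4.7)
The plan is to use the Redenomination axiom to reduce the two-variable aggregator to a one-variable function, then exploit Symmetry and Reciprocity to pin down that function uniquely.

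First I would normalize. Given $(e_1,e_2) \in E \times E$, apply A2 with $\lambda = 1/e_1$ to obtain $\A(e_1,e_2) = e_1 \A(1, e_2/e_1)$. Thus if I define $f: E \to E$ by $f(t) := \A(1,t)$, the whole aggregator is determined by $f$ via the formula $\A(e_1,e_2) = e_1 \, f(e_2/e_1)$. At this point the problem reduces to showing $f(t) = \sqrt{t}$.

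Next I would translate each remaining axiom into a constraint on $f$. Symmetry gives $e_1 f(e_2/e_1) = e_2 f(e_1/e_2)$, and setting $t = e_2/e_1$ this becomes the identity $f(t) = t \, f(1/t)$, or equivalently $f(1/t) = f(t)/t$. Reciprocity, written in terms of $f$, says $e_1^{-1} f(e_1/e_2) = \bigl(e_1 f(e_2/e_1)\bigr)^{-1}$; after simplification this is the multiplicative relation $f(t) \, f(1/t) = 1$ for all $t > 0$.

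The final step is to combine the two functional equations. Substituting $f(1/t) = f(t)/t$ into $f(t) f(1/t) = 1$ yields $f(t)^2 = t$, and because the aggregator takes values in $E$ (the positive reals), we must take the positive square root $f(t) = \sqrt{t}$. Plugging back into $\A(e_1,e_2) = e_1 f(e_2/e_1)$ produces $\A(e_1,e_2) = \sqrt{e_1 e_2}$, as desired.

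I do not expect a genuine obstacle here: the three axioms are very rigid, and the main subtlety is simply organizing them so that Redenomination collapses the domain, Symmetry and Reciprocity each yield one functional equation, and their compatibility forces $f(t)^2 = t$. The only place one needs to be careful is in selecting the positive square root, which is justified because $\A$ maps into positive reals by definition.
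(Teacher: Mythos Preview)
Your proof is correct. The argument is essentially the same as the paper's, just carried out in multiplicative rather than logarithmic coordinates: the paper sets $\alpha(x,y)=\log \A(e^x,e^y)$, subtracts $(x+y)/2$, and uses translation invariance plus symmetry and oddness to force the difference to vanish, whereas you use homogeneity directly to collapse to $f(t)=\A(1,t)$ and then solve the pair $f(t)=t\,f(1/t)$, $f(t)f(1/t)=1$. Under the substitution $t=e^{y-x}$ your $f$ corresponds to $\exp(\alpha(0,\,\cdot\,))$, and your two functional equations are precisely the symmetry and oddness conditions on $h$ restricted to $x=0$; so the two arguments are logically identical, with yours avoiding the explicit change of variables and hence marginally more elementary.
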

\par
\begin{proof}
It will be expedient to make a logarithmic change of coordinates. In order to do this, we set 
$\alpha(x,y)=\log(\A(e^x, e^y))$, where $\log$ denotes the natural logarithm. 
It is easy to verify that {\bf{A1}}--{\bf{A3}} correspond to the following properties of the new function $\alpha$, defined for all real numbers. 
\begin{enumerate}
\item[\bf{\^A1.}] For all $x,y \in \RR,$ we have $ \alpha(x,y)= \alpha(y,x).$
\item[\bf{\^A2.}] For all $x,y \in \RR,$ and all $ \lambda >0,$ we have $\alpha(x+ \lambda, y+ \lambda)= \alpha(x,y)+ \lambda.$
\item[\bf{\^A3.}] For all $x,y \in \RR,$ we have $\alpha(-x,-y)=-\alpha(x,y).$
\end{enumerate}
Now, we consider the difference
$$h(x,y)=\alpha(x,y)-\frac{x+y}{2}.$$
It is clear that {\bf {\^A1}} and {\bf {\^A3}} imply  
\begin{equation}\label{SymmOdd}
h(x,y)=h(y,x) \qquad , \qquad h(-x,-y)=-h(x,y).
\end{equation}
From {\bf{\^A2}} it follows that for all real values of $\lambda,$ we have
\begin{equation}\label{invariance}
h(x+\lambda, y+\lambda)= h(x,y).
\end{equation}
Applying \eqref{invariance} to $\lambda=-(x+y),$ we get
$$h(x,y)= h(-y,-x)=-h(y,x)=-h(x,y),$$
which shows that $h(x,y)=0.$ Hence $\alpha(x,y)=\dfrac{x+y}{2}.$ Now, by translating this back to the original variables, we obtain
$$\A(e_1,e_2)=\exp(\alpha(\log e_1, \log e_2))= 
\exp\left(\frac{\log e_1+ \log e_2}{2}\right)=\sqrt{e_1e_2}.$$
% \hfill\(\Box\)
\end{proof}
\par
\begin{remark}
Using the geometric mean as an aggregator has the following curious interpretation. It is clear that from the point of view of an investor who is interested in a Euro to USD exchange in the future, the value of the geometric mean aggregator $\A(e_1,e_2)=\sqrt{e_1e_2}$ can be seen as the expected value with respect to a new probability distribution $\PP[\omega_1]=p, \PP[\omega_2]=1-p,$ hence
$$\sqrt{e_1e_2}= p e_1+ (1-p)e_2.$$
The new probability measure always gives a larger weight to the state of the world in which $e_i$ is smaller.
Similarly, the investor who wants to exchange US dollars to Euros in the future can also view the value
of the aggregator as the expected value with respect to some probability measure. What is interesting is that
this probability measure assigns $p$ to $\omega_2$ and $1-p$ to $\omega_1$. The reason for this is:
$$(1-p)\frac{1}{e_1}+ p\frac{1}{e_2}= \frac{pe_1+(1-p)e_2}{e_1e_2}=\frac{\sqrt{e_1e_2}}{e_1e_2}=\frac{1}{\sqrt{e_1e_2}}.$$ 
\end{remark}
\begin{remark}
The proof of theorem \ref{main2} could be re-interpreted in the language of group actions. To see this more clearly, let us consider the more general case of aggregators with $n$ variables $\A(e_1, \dots , e_n).$
One can define analogously an $n$-variable function $h,$ which now should be invariant under all the permutations $P_{\sigma}$ of the coordinates, as well as the translation group $T_{\lambda}$ along the vector $\ov_\lambda= (\lambda,\lambda, \dots, \lambda).$ Moreover, $h$ has to be odd with respect to the reflection $R$ across the origin. It is easy to see that the group $\Gamma$ generated by the transformations $P_{\sigma}, T_{\lambda}$ and $R$ is a finite extension of a group isomorphic to the additive group of real numbers consisting of $T_{\lambda}.$ In fact, any element of this group can be represented by 
$\gamma (x_1, \dots, x_n)= \epsilon(\gamma) (x_{\sigma(1)}, \dots , x_{ \sigma(n)})+ \ov_\lambda,$
where $\epsilon(\gamma)=\pm 1,$ $\sigma$ is a permutation, and $\lambda \in \RR.$ Alternatively, one can represent this group by matrices of the form
$$\begin{pmatrix}
\pm P_{\sigma}  &  \lambda \jv  \\
 0 & 1   \\
\end{pmatrix},$$
where $P_{\sigma}$ is the permutation matrix associated to $\sigma$ and $\jv$ is the column vector whose entries are all equal to $1.$ 
One can show that $\epsilon: \Gamma \to \{ \pm 1 \}$ is a character of $\Gamma$ and the axioms are in fact equivalent to the following condition on $h$ acting on $\Gamma$-orbits:
$h(\gamma \ox)= \epsilon(\gamma) h(\ox).$
When $n=2,$ the $\Gamma$-orbits are never injective, and this is indeed the key to the uniqueness part of the theorem. For $n \ge 3,$ however, one can see that the $\Gamma$-action is generically free, that is the map 
$\gamma \mapsto \gamma \cdot \ox$ is a bijection. 
\end{remark}
\section{The General Case}
\par 
Our approach gives a deterministic answer to Siegel's paradox when there are two equally likely exchange rates in the future, namely that the geometric mean is the {\it only} possible aggregator subject to Symmetry, Redenomination and Reciprocity axioms. 
One might think of other variations of this problem as well, for example when there are more than two such possible future exchange rates $e_1, \dots, e_n.$ 
An aggregator then is a function $\A: E^n \to E,$ where $E^n$ denotes the $n$-fold Cartesian product of $E$ with itself. We formulate the following axioms:
\begin{enumerate}
\item[\bf{A1.}] {\it Symmetry:} For all $e_1, \dots, e_n \in E$, 
we have $\A(e_1, \dots, e_n)=\A(e_{ \sigma(1)}, \dots,  e_{\sigma(n)})$ 
for any permutation $\sigma$ of the set $\{ 1, 2, \dots, n \}.$
\item[\bf{A2.}]  {\it Scaling:} For all $e_1, \dots, e_n \in E$ and every $\lambda>0$, we have
$$\A( \lambda e_1, \dots,  \lambda e_n)= \lambda \cdot \A(e_1, \dots, e_n).$$
\item[\bf{A3.}]  {\it Reciprocity:} For all $e_1, \dots, e_n \in E$, we have 
$$\A(e_1^{-1}, \dots, e_n^{-1})=\A(e_1, \dots, e_n)^{-1}.$$
\end{enumerate}
Having known the result for two rates, it would be natural to assume that the $n$-aggregator in this case should again be the geometric mean  
$\A(e_1, \dots , e_n)=\sqrt[n]{e_1 \dots e_n}.$
This function definitely satisfies the generalized axioms we have proposed above and would be the aggregator of choice. However, as the following proposition shows, this is not the only possibility. Let us call a collection of functions {\it log-convex} if their {\it logarithms} constitute a convex set. Recall that a set is convex if for any two points $A_0$ and $A_1$ in the set and any $\alpha \in [0,1],$ the convex combination $(1-\alpha)A_0+\alpha A_1$ is also in the set. 
\begin{proposition}\label{infinite-agg}
For $n>2$, there are infinitely many aggregators $\A: E^n \to E$ satisfying axioms {\bf{A1}}--{\bf{A3}}. The collection of all such aggregators is $\log$-convex.
\end{proposition}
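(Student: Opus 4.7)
The plan is to imitate the logarithmic change of variables used in the proof of Theorem~\ref{main2}. Setting $\alpha(x_1,\dots,x_n) = \log \A(e^{x_1},\dots,e^{x_n})$, the axioms \textbf{A1}--\textbf{A3} translate into: $\alpha$ is $S_n$-symmetric, satisfies $\alpha(x_1 + \lambda, \dots, x_n + \lambda) = \alpha(x_1, \dots, x_n) + \lambda$ for every $\lambda \in \RR$, and is odd under global negation. Writing $\alpha = \tfrac{1}{n}(x_1 + \cdots + x_n) + h$, these three conditions become: $h$ is symmetric, translation-invariant along the diagonal, and odd. The geometric-mean aggregator corresponds to $h \equiv 0$; any other admissible $h$ produces a distinct aggregator, so the problem reduces to exhibiting nontrivial such $h$ for $n\ge 3$ and establishing convexity of the space of admissible $\alpha$'s.

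First I would produce one explicit nonzero $h$. A convenient choice is
\[
h(x_1, \dots, x_n) = \sum_{i=1}^n (x_i - \bar x)^3, \qquad \bar x = \tfrac{1}{n} \sum_{j=1}^n x_j,
\]
which is manifestly $S_n$-symmetric, translation-invariant (each $x_i - \bar x$ is), and odd (the cube is odd and $x \mapsto -x$ sends $x_i - \bar x$ to its negative). Evaluating at $(1, 0, \dots, 0)$ shows that it is not identically zero for $n \ge 3$; note that it does vanish identically for $n = 2$, as it must by Theorem~\ref{main2}. Since the three conditions on $h$ are linear, every real multiple $c \cdot h$ remains admissible, yielding a one-parameter family of distinct aggregators
\[
\A_c(e_1, \dots, e_n) = \sqrt[n]{e_1 \cdots e_n} \cdot \exp\bigl(c \cdot h(\log e_1, \dots, \log e_n)\bigr), \quad c \in \RR,
\]
which already gives infinitely many.

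For the log-convexity claim, I would observe that all three axioms are affine-linear constraints on $\alpha = \log \A$: symmetry \textbf{A1} and reciprocity \textbf{A3} are homogeneous linear, while the scaling axiom \textbf{A2} is inhomogeneous linear (one particular solution is $\tfrac{1}{n}(x_1 + \cdots + x_n)$, and the difference of any two solutions is diagonal-translation-invariant). Consequently the set of admissible $\alpha$'s is an affine subspace of the space of functions $\RR^n \to \RR$, hence convex. Translated back, this is precisely the assertion that $\{\log \A : \A \text{ satisfies \textbf{A1}--\textbf{A3}}\}$ is convex, i.e.\ the collection of valid aggregators is log-convex.

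The main obstacle is conceptual rather than computational: one must explain why the cancellation argument that forced $h \equiv 0$ in Theorem~\ref{main2} (choosing $\lambda = -(x+y)$ to combine symmetry with oddness into $h=-h$) fails to generalize. The reason, as already hinted in the remark following that theorem, is that for $n = 2$ the orbit of a generic point under the group generated by permutations, diagonal translations, and global negation degenerates, whereas for $n \ge 3$ this action is generically free, leaving ample room for nontrivial $h$'s. The explicit construction above is a clean way to exploit this extra freedom without having to parametrize the full space of admissible $h$.
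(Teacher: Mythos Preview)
Your argument is correct but follows a different route from the paper's. The paper exhibits the \emph{median} (with the geometric mean of the two middle order statistics when $n$ is even) as a second aggregator satisfying \textbf{A1}--\textbf{A3}, then checks directly that $\A_0^{1-\alpha}\A_1^{\alpha}$ again satisfies the axioms whenever $\A_0$ and $\A_1$ do; taking $\A_0$ to be the geometric mean and $\A_1$ the median yields the infinite family and the log-convexity at once. You instead stay entirely in logarithmic coordinates, produce the explicit third-central-moment function $h(x)=\sum_i (x_i-\bar x)^3$ as a nonzero element of the linear space of admissible $h$'s, and derive log-convexity from the clean structural observation that the constraints on $\alpha=\log\A$ are affine. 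Your construction gives a smooth (indeed polynomial in $\log e_i$) perturbation of the geometric mean and meshes more naturally with the framework of the subsequent Theorem~\ref{mainn}; the paper's choice of the median, on the other hand, is a concrete and statistically interpretable aggregator that readers may find more compelling as an example. Both proofs are short and self-contained, so the difference is largely one of taste.
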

\begin{proof}
Let $e^{(1)} \le \dots \le e^{(n)}$ denote the order statistics of the sequence $e_1, \dots, e_n.$ In other words, $e^{(1)}$ is the minimum of $e_1, \dots, e_n$, $e^{(2)}$ is the second smallest term, etc. 
For $n \ge 3$, let $\A(e_1, \dots, e_n)$ denote the median of $e_1, \dots, e_n.$ When $n$ is odd, this is defined to be $ e^{(\frac{n+1}{2})},$ and when $n$ is even, it is defined to be $\sqrt{e^{(\frac{n}{2})}  e^{(\frac{n}{2}+1)} }.$
We are going to show that the median is an aggregator satisfying {\bf{A1}}--{\bf{A3}}. 
Notice that the order statistics are invariant under permutations and scale with $\lambda>0$ when the original sequence is scaled. 
It remains to prove {\bf{A3}}. Note that the order statistics for $1/e_1, \dots, 1/e_n$ are simply $1/e^{(n)} \le \dots \le 1/e^{(1)}.$ 
This proves the claim. 
\par
One can verify that if $\A_0$ and $\A_1$ are two aggregators satisfying {\bf{A1}}--{\bf{A3}}, and $\alpha \in [0,1],$ then $\A_{\alpha}= \A_0^{1- \alpha} \A_1^{\alpha} $ is also an aggregator satisfying {\bf{A1}}--{\bf{A3}}. 
One can also check that these aggregators are all different if $\A_0$ and $\A_1$ are different (in this case, the geometric mean and the median). 
This proves that there are infinitely many aggregators. 
Moreover, since $\log\A_{\alpha}$ is a convex combination of $\log\A_0$ and $\log\A_1,$ 
the collection of aggregators satisfying {\bf{A1}}--{\bf{A3}} is log-convex. 
\end{proof}
\par
We can in fact go further and characterize all aggregators satisfying {\bf{A1}}--{\bf{A3}} in any dimension. 
As before, consider the order statistics $e^{(1)} \le \dots \le e^{(n)}$ 
and form the consecutive ratios $e^{(n)}/e^{(n-1)}, e^{(n-1)}/e^{(n-2)}, \dots, e^{(2)}/e^{(1)}.$ 
We have the following general characterization theorem. 
\begin{theorem}
\label{mainn}
Any aggregator $\A: E^n \to E$ satisfying {\bf{A1}}--{\bf{A3}} is of the form 
\begin{equation}\label{aggregatorsn}
\A(e_1, \dots, e_n)=(e_1\dots e_n)^{1/n} \beta(e^{(n)}/e^{(n-1)}, \dots, e^{(2)}/e^{(1)}),
\end{equation}
where $\beta: (\RR^{\ge 1})^{n-1} \to \RR^{+}$ is a function satisfying 
$\beta(u_1, \dots, u_{n-1}) \beta(u_{n-1}, \dots, u_1)=1.$ 
Conversely, for any such function $\beta,$ the aggregator $\A$ defined by \eqref{aggregatorsn} satisfies {\bf{A1}}--{\bf{A3}}. 
\end{theorem}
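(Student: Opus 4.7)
The plan is to mimic the logarithmic reduction used in the proof of Theorem \ref{main2} and then exploit translation invariance to peel off the geometric-mean factor, leaving a residual that is constrained only by the reciprocity axiom. Setting $\alpha(x_1,\dots,x_n)=\log\A(e^{x_1},\dots,e^{x_n})$, the axioms \textbf{A1}--\textbf{A3} translate into: $\alpha$ is symmetric, $\alpha$ shifts by $\lambda$ under the diagonal translation $x_i\mapsto x_i+\lambda$, and $\alpha(-x_1,\dots,-x_n)=-\alpha(x_1,\dots,x_n)$. I then introduce the correction $h(x_1,\dots,x_n):=\alpha(x_1,\dots,x_n)-\tfrac{1}{n}\sum_i x_i$, which is symmetric, fully translation-invariant (the arithmetic-mean term absorbs the diagonal shift), and still odd under $x\mapsto -x$.

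Next I would observe that symmetry together with full translation invariance force $h$ to depend only on the ordered consecutive differences $v_i:=x^{(n+1-i)}-x^{(n-i)}$ for $i=1,\dots,n-1$, where $x^{(1)}\le\cdots\le x^{(n)}$ are the order statistics of $(x_1,\dots,x_n)$. Indeed, symmetry allows sorting, and translation invariance quotients out the remaining one-parameter diagonal freedom; the orbit of $(x_1,\dots,x_n)$ under this action is uniquely labeled by $(v_1,\dots,v_{n-1})\in\RR_{\ge 0}^{\,n-1}$. Hence there is a well-defined function $g:\RR_{\ge 0}^{\,n-1}\to\RR$ with $h(x_1,\dots,x_n)=g(v_1,\dots,v_{n-1})$.

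Now I translate the oddness condition through this parametrization. Negating every coordinate reverses the order statistics, sending the sorted value $x^{(k)}$ to $-x^{(n+1-k)}$, so the new tuple of consecutive top-down differences is $(v_{n-1},v_{n-2},\dots,v_1)$, exactly the reverse of the original. Oddness then reads $g(v_{n-1},\dots,v_1)=-g(v_1,\dots,v_{n-1})$. Exponentiating, with $u_i:=e^{v_i}\ge 1$ and $\beta(u_1,\dots,u_{n-1}):=\exp g(\log u_1,\dots,\log u_{n-1})$, the required identity $\beta(u_1,\dots,u_{n-1})\beta(u_{n-1},\dots,u_1)=1$ drops out, and the representation \eqref{aggregatorsn} follows. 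The converse is a direct verification: given any such $\beta$, the prescription \eqref{aggregatorsn} is symmetric (the order statistics and their ratios are), scales correctly (the geometric mean is homogeneous of degree one, and the $u_i$ are scale-invariant), and satisfies reciprocity because inverting all $e_i$ reverses the order statistics and hence reverses the tuple $(u_1,\dots,u_{n-1})$, upon which the palindromic-inverse condition on $\beta$ precisely cancels the sign flip of the geometric-mean term.

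The main delicate step is the oddness bookkeeping in the third paragraph: one must check carefully that negation-plus-resorting converts $(v_1,\dots,v_{n-1})$ into exactly $(v_{n-1},\dots,v_1)$, so that the clean palindromic-inverse relation on $\beta$ emerges. Everything else is formal once the correct logarithmic coordinates are chosen and the quotient of $\RR^n$ by the symmetric group and the diagonal $\RR$-translation is identified; this is essentially the generic-orbit picture foreshadowed by the second remark after Theorem \ref{main2}, with the $n-1$ ratios $u_i$ serving as coordinates on the orbit space.
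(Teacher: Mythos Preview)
Your proposal is correct and follows essentially the same approach as the paper: both take logarithms, subtract the arithmetic mean to obtain a symmetric, translation-invariant, odd function $h$, reduce $h$ to a function of the ordered consecutive differences, and then read off the reciprocity condition on $\beta$ from the oddness of $h$ under negation. The only cosmetic difference is that the paper defines $\beta$ by an explicit formula $\beta(u_{n-1},\dots,u_1)=\exp\bigl(h(0,\log u_1,\dots,\log(u_1\cdots u_{n-1}))\bigr)$ and then verifies the palindromic-inverse identity by a short chain of equalities, whereas you phrase the same reduction in orbit-space language; your careful check that negation reverses the tuple $(v_1,\dots,v_{n-1})$ is exactly the content of the paper's final displayed computation.
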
 
For simplicity, let us call a function $\beta$ satisfying $\beta(u_1, \dots, u_{n-1}) \beta(u_{n-1}, \dots, u_1)=1$ 
a {\it reciprocity} function. Note that the constant function $\beta=1$ is always a reciprocity function, and using it in formula \eqref{aggregatorsn} produces the familiar geometric mean aggregator. For a non-trivial example, let $n=3$ and take $\beta(u_1,u_2)=(u_2/u_1)^{1/3}$ to obtain the median as another aggregator, as we have verified before. 
\begin{proof}
Let us prove the converse first. For any aggregator $\A$ defined as above, we can see that it satisfies {\bf{A1}}--{\bf{A3}} by verifying how the two factors on the right-hand side of the formula transform under symmetry, scaling and reciprocals. We know how $(e_1\dots e_n)^{1/n}$ transforms, because it already satisfies {\bf{A1}}--{\bf{A3}}. It is also pretty clear that the value of $ \beta(e^{(n)}/e^{(n-1)}, \dots, e^{(2)}/e^{(1)})$ is invariant under a scaling of $e_1, \dots, e_n,$ as well as their permutations, % since a permutation does not change the order statistics, 
so all we need to check is how it transforms if $e_1, \dots, e_n$ are replaced by their reciprocals ${e_1}^{-1}, \dots, {e_n}^{-1};$ this will just reverse the order of the variables $e^{(n)}/e^{(n-1)}, \dots, e^{(2)}/e^{(1)},$ so we get the reciprocal value for $\beta$, since it is a reciprocity function. 
\par
This argument meanwhile implies that if formula \eqref{aggregatorsn} holds, then the aggregator $\A$ automatically satisfies {\bf{A1}} and {\bf{A2}}, and satisfies {\bf{A3}} exactly when the function $\beta$ is a reciprocity function, since the $n$-tuple $(e^{(n)}/e^{(n-1)}, \dots, e^{(2)}/e^{(1)})$ can assume any arbitrary value in $(\RR^{\ge 1})^{n-1}$ by an appropriate choice of $e_1, \dots, e_n.$
\par
Now, let us prove the other direction of the theorem. Since $\A$ is invariant under permutations, we can assume without loss of generality that $e_1 \le \dots \le e_n.$ We now need to establish the existence of a reciprocity function $\beta: (\RR^{+})^{n-1} \to \RR^{+}$ such that 
$$\A(e_1, \dots, e_n)=(e_1\dots e_n)^{1/n} \beta(e_n/e_{n-1}, \dots, e_2/e_1).$$
As in the case of $n=2,$ we set 
% \begin{equation}\label{log-co}
$$h(x_1, \dots, x_n)=\log(\A(e^{x_1}, \dots, e^{x_n})) - \frac{x_1+\dots+x_n}{n}.$$
% \end{equation}
One can again see that as a result of {\bf{A1}}--{\bf{A3}}, $h$ is symmetric, translation-invariant, and satisfies
$h(-x_1,\dots,-x_n)= -h(x_1,\dots,x_n).$ 
% \\
For $u_1, \dots, u_{n-1} \in \RR^+,$ set 
$$\beta(u_{n-1}, u_{n-2}, \dots, u_1)=\exp(h(0, \log u_1, \log(u_1 u_2), \dots, \log(u_1 \dots u_{n-1})).$$ 
\par
The claim now follows by combining these formulas for $e_1=e^{x_1}, \dots, e_n=e^{x_n}.$ 
We just need to keep in mind that, by translation invariance, 
$$h(x_1, \dots, x_n)=h(0, x_2-x_1, \dots, x_n-x_1).$$
\par
As we noted earlier, since formula \eqref{aggregatorsn} holds and $\A$ satisfies the reciprocity axiom {\bf{A3}}, $\beta$ has to be a reciprocity function. However, to see this explicitly, write 
\begin{equation*}
\begin{split}
h(0, \log u_1, \log(u_1 u_2), \dots, \log(u_1 \dots u_{n-1}) & =-h(0, -\log u_1, \dots, -\log(u_1 \dots u_{n-1})) \\
% \mbox{Translate the variables by $\log(u_1 \dots u_{n-1})$:} \\
& =-h(\log(u_1 \dots u_{n-1}), \log((u_2 \dots u_{n-1})), \dots, 0) \\
& =-h(0, \log u_{n-1}, \dots, \log(u_1 \dots u_{n-1})), 
\end{split}
\end{equation*}
where we have translated the variables by $\log(u_1 \dots u_{n-1})$ in the second step and used symmetry in the last. Exponentiating the first and the last terms here produce $\beta(u_{n-1}, u_{n-2}, \dots, u_1)$ and $\beta(u_1, u_2, \dots, u_{n-1})^{-1}.$
\end{proof}
\section{Discussion}
\par
We have been able to classify all aggregators satisfying {\bf{A1}}--{\bf{A3}} in terms of the geometric mean and a reciprocity function $\beta,$ as in theorem \ref{mainn}. For $n=2,$ this function $\beta$ is identical to 1, so we obtain a unique aggregator in this case, but there are plenty of options in other cases. It would be interesting to find out an interpretation of the reciprocity function $\beta$ in economic terms. This could help our understanding of aggregators and may even point to some natural axioms or constraints % condition(s) 
that could narrow down the possibilities for an aggregator or even determine it uniquely. In all cases, however, the geometric mean stands out as the trivial aggregator of choice. 
\par
Another direction in which the problem could be generalized would be when the future exchange rates $e_1, \dots, e_n$ are {\it not} necessarily equally likely, but occur with probabilities $p_1, \dots, p_n,$ respectively. It would be fair to assume that  the following weighted geometric mean would be the natural aggregator in this case. 
\begin{equation}\label{weighted}
\A(e_1, \dots, e_n)= e_1^{p_1} e_2^{p_2} \dots e_n^{p_n}.
\end{equation}
One could in fact argue for this aggregator, at least when $p_1, \dots, p_n$ are all rational numbers, but should take extra care in interpreting the Symmetry axiom {\bf{A1}}. To demonstrate this case, let us assume that there are only two possible future exchange rates $e_1$ and $e_2,$ occurring with probabilities $m/(m+n)$ and $n/(m+n),$ respectively. We can recast this situation as the case of $m+n$ equally likely future exchange rates, where $m$ of those possibilities are $e_1$ and the rest are $e_2.$ Then we can apply theorem \ref{mainn} to find all the aggregators in this case, in particular our distinguished geometric mean, which would be a weighted one, as suggested in equation \eqref{weighted}. This argument could be adapted to extend to any number of future exchange rates with rational probabilities, but runs into difficulty when the probabilities $p_1, \dots, p_n,$ contain irrational numbers. Fixing this issue may require an appropriate modification of the Symmetry axiom or formulation of an additional {\it Continuity} axiom to extend the results to the general case and could open up new avenues for investigation.
\section{Acknowledgements}
\par
We wish to thank Hamed Ghoddusi for introducing Siegel's paradox to us and commenting on an earlier draft of this paper. We would also like to thank Hazhir Rahmandad and Hassan Tehranian for their helpful comments on the paper which led to improvements in the text, as well as Ken Rogoff for his interest in our work.
% \par  % Funding and Conflict of Interest Statement
% This research did not receive any specific grant from funding agencies in the public, commercial, or not-for-profit sectors. The authors have no actual or potential conflict of interests that could inappropriately influence, or be perceived to influence, their work.
% 
%%%%%%%%%%%%%%%%%%%%%% End of Main Text %%%%%%%%%%%%%%%%%%%
% 
%%%%%%%%% BIBLIOGRAPHY %%%%%%%%%%
% 
%\bibliographystyle{alpha}
\bibliographystyle{apa-good}
\bibliography{siegel} 

\begin{thebibliography}{7}
\expandafter\ifx\csname natexlab\endcsname\relax\def\natexlab#1{#1}\fi
\expandafter\ifx\csname url\endcsname\relax
  \def\url#1{{\tt #1}}\fi
\expandafter\ifx\csname urlprefix\endcsname\relax\def\urlprefix{URL }\fi

\bibitem[{Beenstock(1985)}]{Beenstock85}
Beenstock, M. (1985).
\newblock Forward exchange rates and ``{S}iegel's paradox".
\newblock {\em Oxford Economic Papers\/}, {\em 37\/} (2), 298--303.

\bibitem[{Edlin(2002)}]{Edlin02}
Edlin, A.~S. (2002).
\newblock Forward discount bias, {N}alebuff's envelope puzzle, and the {S}iegel
  paradox in foreign exchange.
\newblock {\em Topics in Theoretical Economics\/}, {\em 2\/} (1).
\newline\url{http://dx.doi.org/10.2202/1534-598X.1032}
% \newblock\url{http://dx.doi.org/10.2202/1534-598X.1032}
%% \newline\urlprefix\url{http://dx.doi.org/10.2202/1534-598X.1032}

\bibitem[{Gollier \& Weitzman(2010)}]{Weitzman-Gollier}
Gollier, C., \& Weitzman, M.~L. (2010).
\newblock How should the distant future be discounted when discount rates are
  uncertain?
\newblock {\em Economics Letters\/}, {\em 107\/} (3), 350--353.
\newline\url{http://dx.doi.org/10.1016/j.econlet.2010.03.001}
% \newblock\url{http://dx.doi.org/10.1016/j.econlet.2010.03.001}
%% \newline\urlprefix\url{http://dx.doi.org/10.1016/j.econlet.2010.03.001}

\bibitem[{Nalebuff(1989)}]{Nalebuff}
Nalebuff, B. (1989).
\newblock Puzzles: The other person's envelope is always greener.
\newblock {\em The Journal of Economic Perspectives\/}, {\em 3\/} (1), 171--181.

\bibitem[{Obstfeld \& Rogoff(1996)}]{Rogoff}
Obstfeld, M., \& Rogoff, K.~S. (1996).
\newblock {\em Foundations of International Macroeconomics\/}.
\newblock The MIT Press.

\bibitem[{Roper(1975)}]{Roper75}
Roper, D.~E. (1975).
\newblock The role of expected value analysis for speculative decisions in the
  forward currency market.
\newblock {\em The Quarterly Journal of Economics\/}, {\em 89\/} (1), 157--169.

\bibitem[{Siegel(1972)}]{Siegel72}
Siegel, J.~J. (1972).
\newblock Risk, interest rates and the forward exchange.
\newblock {\em The Quarterly Journal of Economics\/}, {\em 86\/} (2), 303--309.

\end{thebibliography}
% 
%%% Note: Remember to remove all URL prefixes \urlprefix in the .bbl file!
% 
% 
\end{document}